\newtheorem{asmpt}{Assumption}
\newtheorem{thm}{Theorem}
\author{Máté Csanád\\Eötvös University, Budapest, Hungary}
\title{Is there a physical continuum?}
\date{\today}
\begin{document}

\maketitle

\abstract{We are used to the fact that most if not all physical theories are based on the set of real numbers (or another associative division algebra). These all have a cardinality larger than that of the natural numbers, i.e. form a continuum. It is often asked, whether there really is a continuum in the physical world, or whether a future physical theory could work with just countable infinities. The latter could for example be compatible with a quantized space-time. In this paper we formulate a simple model of the brain and show that within the presented natural assumptions, the continuum has to exist for at least some physical quantities.}

\section{Introduction}

Real numbers were first discussed by Descartes, who noted that some polinomials have roots that are not rational.~\cite{Domski:2021} Later it was also proven that real numbers (members of the set $\mathbb{R}$) are either rational (members of the set $\mathbb{Q}$) or irrational, and the cardinality of the set of rational numbers is countable ($\aleph_0$), while that of the set of irrational numbers is uncountable and equal to $\mathfrak{c}$, the cardinality of $\mathbb{R}$. The continuum hypothesis states that there is no subset of $\mathbb{R}$ that is of cardinalty strictly smaller than $\mathfrak{c}$ and strictly larger than $\aleph_0$. This can not be proven within the framework of the Zermelo--Fraenkel set theory with the axiom of choice (ZFC), or in other words, it is independent from it.~\cite{Godel:1940,Cohen:1963,Cohen:1964} One has to note however the following philosophical problem with cardinalities. According to Skolem's paradox, every countable axiomatization of set theory has a countable model, even though there exist not countable sets in the given axiomatization. The resolution~\cite{Putnam:1980} of this paradox is that cardinality is dependent on the axiomatization, as bijections between two sets may exist in one axiomatization, while there are none in another one.  

While the above points are interesting from a pure mathematical point if view, the discovery of real numbers (the continuum) was of probably even greater importance in physics, since it made differential calculus possible. Since Leibniz and Newton, possibly all physical theories are based on the real numbers, or an extension of them, the complex numbers or the quaternions (and any associative division algebra is isomporph to one of these according to the Frobenius theorem), maybe the octonions or some non-associative division algebra.\cite{Baez:2002}

On the other hand, the advent of quantum physics showed that many complications can be solved if even space-time was quantized, i.e. if theories would not be based on the continuum~\cite{Hartland:1947}. One can then ask the question, if there are any arguments beyond present theories (such as quantum field theories or classical gravity, see e.g. Refs.~\cite{Mann:2020,Siegel:2018}) that would help us decide if the continuum is part of the physical reality or just a mathematical framework. In this paper we present an argument that favors the physical existence of the continuum, without relying on a specific physical theory.

\section{An extremely simple model of the brain}

Let us suppose the brain can be modeled with a set of ingredients (cells, neurons, molecules, atoms, particles, etc.), and each of these can be in one of its possible states. One may realize that this is a pretty simple and general model, and in fact most finite physical objects can be modeled such a way. Now the numbers of ingredients shall be finite (let us call it $N$), for example due to the finite amount of available energy or mass. In principle one could imagine a model with for example an infinite amount of photons or gluons in the brain, with their energies following a sequence that creates a finite sum. In this paper however, following Occam's razor, we make the assumption that the number of brain ingredients is finite.

\begin{asmpt}
The brain can be modeled by a finite set of ingredients. The possible states of any of the ingredients can be known a priori.
\end{asmpt}
Let us declare the ``brain ingredients'' as

\begin{align}
p=\{p_i \; | \; i\in \mathbb{N} \, \& \, i<N \}.
\end{align}
Let us similarly declare the set of possible states of a given ingredient as

\begin{align}
s=\{s_i \; | \; i\in \mathbb{N} \, \& \, i<N \}.
\end{align}
The cardinality of $s_i$ (for any $i$) can be
\begin{itemize}
\item finite,
\item countably infinite ($\aleph_0$),
\item or larger (at least $\mathfrak{c}$),
\end{itemize}
and this could differ from ingredient to ingredient. Let us adopt a reductionist approach here and determine the number of possible states {\emph of the whole brain}. Let us declare $S$ as the set of possible states of the whole brain and make the following assumption.
\begin{asmpt}
The possible states of the brain can be constructed via the possible states of its ingredients, i.e. every distinct brain state can be described as $(\sigma_1,\dots,\sigma_N)$ where $\sigma_i \in s_i$.
\end{asmpt}
This simple model is illustrated in Fig.~\ref{f:brain}.

\begin{figure}
\begin{center}
\includegraphics[width=0.6\linewidth]{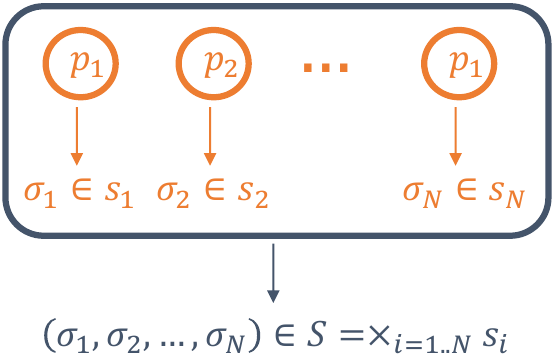}
\end{center}
\caption{The simple model of the brain outlined in this paper. Ingredients are labeled with $p_i$, their possible set of states are $s_i$, and one selected state is $\sigma_i$ (for $i=1\dots N$). The state of the brain is then an element of $\underset{i=1\dots N}{\times} s_i$, in this case $(\sigma_1,\dots,\sigma_N)$, where $\sigma_i \in s_i$, as indicated in the text. \label{f:brain}}
\end{figure}  

With this, we can make the following statements:
\begin{thm}
$\forall i=1\dots N: \; |s_i|\leq\mathfrak{c} \; \& \; \exists i\in \mathbb{N}$ such that $|s_i|=\mathfrak{c} \Rightarrow |S|=\mathfrak{c}$
\end{thm}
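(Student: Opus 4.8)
The plan is to read off from Assumption~2 that the brain state space is exactly the finite Cartesian product $S = \prod_{i=1}^{N} s_i$, and then to determine $|S|$ by establishing the two inequalities $\mathfrak{c} \leq |S|$ and $|S| \leq \mathfrak{c}$ separately, invoking the Cantor--Schr\"oder--Bernstein theorem to conclude equality. Throughout I would use the (physically natural) fact that every ingredient admits at least one state, so each $s_i$ is nonempty and the product is well-behaved.

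For the lower bound I would pick the index $i$ guaranteed by the hypothesis $\exists i$ with $|s_i| = \mathfrak{c}$, fix an arbitrary value $\sigma_j \in s_j$ in every other coordinate $j \neq i$, and consider the map $s_i \to S$ that sends $\sigma_i \mapsto (\sigma_1,\dots,\sigma_N)$. This map is clearly injective, so it embeds a set of cardinality $\mathfrak{c}$ into $S$, giving $\mathfrak{c} = |s_i| \leq |S|$. This direction is routine.

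The upper bound is where the real content sits. From $|s_i| \leq \mathfrak{c}$ for every $i$ and the finiteness of $N$ I would write
\[
|S| = \prod_{i=1}^{N} |s_i| \;\leq\; \mathfrak{c}^{\,N} \;=\; \mathfrak{c},
\]
so the whole argument hinges on the cardinal-arithmetic identity $\mathfrak{c}^{\,N} = \mathfrak{c}$ for finite $N$. I expect this to be the main obstacle, since it is the only genuinely nontrivial step. It reduces by induction on $N$ to the single statement $\mathfrak{c}\cdot\mathfrak{c} = \mathfrak{c}$, which one proves by writing $\mathfrak{c} = 2^{\aleph_0}$ and computing $2^{\aleph_0}\cdot 2^{\aleph_0} = 2^{\aleph_0 + \aleph_0} = 2^{\aleph_0} = \mathfrak{c}$, using $\aleph_0 + \aleph_0 = \aleph_0$; equivalently one exhibits an explicit bijection $\mathbb{R}\times\mathbb{R} \to \mathbb{R}$ (for instance by interleaving decimal expansions, taking the standard care over the non-uniqueness of representations ending in all $9$'s). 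Combining $\mathfrak{c} \leq |S|$ with $|S| \leq \mathfrak{c}$ then yields $|S| = \mathfrak{c}$ by Cantor--Schr\"oder--Bernstein, completing the proof.
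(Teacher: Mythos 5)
Your proof is correct and follows essentially the same route as the paper's: identify $S$ with the finite Cartesian product $\times_{i=1\dots N}\, s_i$ and reduce everything to the cardinal arithmetic $\mathfrak{c}^N=\mathfrak{c}$ (equivalently $\mathfrak{c}\cdot\mathfrak{c}=\mathfrak{c}$), which is exactly what the paper invokes via ``$|X^N|=|X|$ for infinite $X$'' and ``$|X\times Y|=\mathfrak{c}$ when $|X|\leq\mathfrak{c}$, $|Y|=\mathfrak{c}$.'' You are merely more explicit, separating the two inequalities, citing Cantor--Schr\"oder--Bernstein, and noting the nonemptiness of each $s_i$ (a hypothesis the paper's stated lemma silently needs, since an empty factor would make $S$ empty).
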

\begin{proof}
$S$, based on the axiom of choice, can be constructed as $s_1\times \cdots \times s_N \equiv \underset{i=1\dots N}{\times} s_i$, as one ``brain state'' corresponds to the state of all the ingredients. It is furthermore known that for any infinite set $X$, $|X^N|=|X|$. On the other hand for sets $X$ and $Y$, if $|X|\leq\mathfrak{c}$ and $|Y|=\mathfrak{c}$, then $|X\times Y|=\mathfrak{c}$. Putting all this together it is clear that our statement is proven.
\end{proof}
\begin{thm}
$\forall i=1\dots N: \; |s_i|\leq\aleph_0 \Rightarrow |S|\leq\aleph_0$\label{t:scountable}
\end{thm}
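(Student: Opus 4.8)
The plan is to mirror the structure of the preceding theorem's proof, but to replace the continuum cardinal arithmetic with the corresponding facts for countable sets. First I would invoke Assumption~2 together with the axiom of choice to write $S$ as the finite Cartesian product $S = s_1\times\cdots\times s_N \equiv \prod_{i=1}^{N} s_i$, exactly as before, since one brain state is precisely a choice of a state for each ingredient. The task then reduces to showing that a finite product of sets, each of cardinality at most $\aleph_0$, itself has cardinality at most $\aleph_0$.

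The key arithmetic fact I would establish is that the product of two countable sets is countable, i.e. $\aleph_0\cdot\aleph_0=\aleph_0$. I would prove this by exhibiting an explicit injection from $\mathbb{N}\times\mathbb{N}$ into $\mathbb{N}$, for instance the Cantor pairing function $(m,n)\mapsto\tfrac{1}{2}(m+n)(m+n+1)+n$, which shows $|\mathbb{N}\times\mathbb{N}|=\aleph_0$. Since any set of cardinality at most $\aleph_0$ injects into $\mathbb{N}$, composing injections immediately gives $|X\times Y|\leq\aleph_0$ whenever $|X|\leq\aleph_0$ and $|Y|\leq\aleph_0$.

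With this base case in hand, I would finish by induction on the number of ingredients $N$. The case $N=1$ is immediate, as then $S=s_1$ and $|s_1|\leq\aleph_0$ by hypothesis. For the inductive step, I would write $\prod_{i=1}^{N} s_i = \left(\prod_{i=1}^{N-1} s_i\right)\times s_N$ and apply the two-factor result above, using the induction hypothesis $|\prod_{i=1}^{N-1} s_i|\leq\aleph_0$, to conclude $|S|\leq\aleph_0$. I would also remark that the degenerate case in which some $s_i$ is empty is harmless, since then $S$ is empty and $|S|=0\leq\aleph_0$.

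The step I expect to be the main (and essentially the only) obstacle is establishing the pairing bijection $\aleph_0\cdot\aleph_0=\aleph_0$; everything else is a routine finite induction. It is worth noting that, in contrast to the continuum case, the finiteness of $N$ is genuinely essential here: an infinite product of nontrivial countable sets is uncountable, so the argument does \emph{not} extend to $N=\infty$. This is exactly where the finiteness clause of Assumption~1 enters the proof.
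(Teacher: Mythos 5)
Your proposal is correct and follows essentially the same route as the paper: both write $S$ as the finite Cartesian product $s_1\times\cdots\times s_N$ and reduce to the standard cardinal arithmetic fact that a finite product of at most countable sets is at most countable. The only difference is that you supply the details (the Cantor pairing function and the induction on $N$) that the paper simply cites as known.
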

\begin{thm}
$\forall i=1\dots N: \; |s_i|$ is finite $\Rightarrow |S|$ is also finite.
\end{thm}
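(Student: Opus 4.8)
The plan is to exploit the same product structure used in the first theorem, namely that $S=\underset{i=1\dots N}{\times} s_i$ by Assumption~2, and then simply count. First I would recall that the cardinality of a Cartesian product of finitely many sets equals the product of their cardinalities, so that $|S|=\prod_{i=1}^{N}|s_i|$. Writing $|s_i|=n_i$ with each $n_i\in\mathbb{N}$ (finite by hypothesis), the right-hand side is a finite product of natural numbers and hence is itself a natural number, which establishes that $|S|$ is finite.

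Alternatively, and perhaps more self-contained, I would argue by induction on $N$. The base case $N=1$ is immediate since $S=s_1$ is finite by assumption. For the inductive step I would write $\underset{i=1\dots N+1}{\times} s_i \cong \left(\underset{i=1\dots N}{\times} s_i\right)\times s_{N+1}$ and invoke the elementary fact that the product of two finite sets is finite, with cardinality the product of the two cardinalities. Since the first factor is finite by the inductive hypothesis and the second is finite by assumption, the whole product is finite, closing the induction.

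The only subtle point — and the step I would flag as the real content — is the essential role played by the finiteness of $N$ guaranteed by Assumption~1. The statement genuinely fails for infinite index sets: even if every $s_i$ had just two elements, an infinite product $\{0,1\}^{\aleph_0}$ already has cardinality $2^{\aleph_0}=\mathfrak{c}$. Thus it is the finiteness of the number of ingredients, not merely the finiteness of each ingredient's state space, that drives the conclusion, and I would make sure the proof invokes it explicitly rather than treating it as cosmetic.
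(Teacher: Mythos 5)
Your proposal is correct and follows essentially the same route as the paper, which also writes $S$ as the Cartesian product $\underset{i=1\dots N}{\times} s_i$ and invokes the fact that for finite sets the cardinality of the product is the product of the cardinalities. Your added induction and your remark that finiteness of $N$ is essential (since $\{0,1\}^{\aleph_0}$ has cardinality $\mathfrak{c}$) are sound elaborations; the latter point is in fact made by the paper itself in its Theorem 4.
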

\begin{proof}
These latter two statements can be proven similarly to the previous proof, taking also into account that for finite sets, the cardinality of the Descartes-product is the product of the cardinalities.
\end{proof}

Let us make a small detour from the above, and investigate what would happen if the number of ingredients would be countably infinite (i.e. if $|p|=\aleph_0$ and of course also $|s|=\aleph_0$). For this case a different statement could be proven:
\begin{thm}
$|p|=\aleph_0 \; \& \; \forall i\in \mathbb{N}: \; |s_i| \textnormal{ is finite } \Rightarrow |S|=\mathfrak{c}$.
\end{thm}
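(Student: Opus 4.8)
The plan is to show $|S|=\mathfrak{c}$ by establishing the two inequalities $|S|\leq\mathfrak{c}$ and $|S|\geq\mathfrak{c}$ separately and then invoking the Cantor--Schröder--Bernstein theorem. Here $S$ is, exactly as in the constructions above, the full Cartesian product $S=\underset{i\in\mathbb{N}}{\times}\,s_i$, except that the index set is now countably infinite rather than finite. The essential phenomenon is that an infinite product of finite sets (each with at least two elements) is no longer finite or countable, but jumps up to the cardinality of the continuum, precisely as the set of binary sequences $\{0,1\}^{\mathbb{N}}$ does.

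For the upper bound I would argue as follows. Each $s_i$ is finite, so there is an injection $s_i\hookrightarrow\mathbb{N}$; choosing one such injection for every $i$ (using the axiom of choice, as elsewhere in the paper) induces an injection of the product $\underset{i\in\mathbb{N}}{\times}\,s_i$ into $\underset{i\in\mathbb{N}}{\times}\,\mathbb{N}=\mathbb{N}^{\mathbb{N}}$. Since $|\mathbb{N}^{\mathbb{N}}|=\aleph_0^{\aleph_0}=2^{\aleph_0}=\mathfrak{c}$, this yields $|S|\leq\mathfrak{c}$.

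For the lower bound I would exhibit an injection in the other direction. In each $s_i$ pick two distinct states $a_i\neq b_i$; then the map $\{0,1\}^{\mathbb{N}}\to S$ sending a binary sequence $(\varepsilon_i)$ to the brain state whose $i$-th coordinate equals $a_i$ when $\varepsilon_i=0$ and $b_i$ when $\varepsilon_i=1$ is visibly injective. Hence $|S|\geq|\{0,1\}^{\mathbb{N}}|=2^{\aleph_0}=\mathfrak{c}$. Combining the two bounds through Cantor--Schröder--Bernstein gives $|S|=\mathfrak{c}$.

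The main obstacle --- or rather the hidden hypothesis that must be made explicit --- is exactly the step where two distinct states are selected in each $s_i$. The statement as written only requires each $|s_i|$ to be finite, which formally permits $|s_i|=1$; if all but finitely many ingredients had a single possible state, the product would collapse to a finite set and the conclusion would fail. The lower-bound construction therefore needs $|s_i|\geq 2$ for infinitely many $i$, and I would note that this is the physically natural reading of an ``ingredient'' that ``can be in one of its possible states,'' so the degenerate singleton case is implicitly excluded.
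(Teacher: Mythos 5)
Your proof is correct, and its overall shape matches the paper's: both reduce the claim to the fact that a countably infinite product of finite sets with at least two elements each has cardinality $\mathfrak{c}$, and both establish this by a two-sided comparison with $\{0,1\}^{\mathbb{N}}$. The one substantive difference is in the lower bound. The paper argues via Cantor's diagonal method that the product cannot be enumerated and concludes it has cardinality ``at least $\mathfrak{c}$''; strictly speaking, diagonalization only yields uncountability, and passing from ``uncountable'' to ``$\geq\mathfrak{c}$'' is not automatic without the continuum hypothesis. Your direct injection $\{0,1\}^{\mathbb{N}}\hookrightarrow S$ (choosing two distinct states $a_i\neq b_i$ in each coordinate) gives $|S|\geq 2^{\aleph_0}=\mathfrak{c}$ outright and so is the more rigorous route. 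You are also right to flag that the theorem as stated only assumes each $|s_i|$ is finite, which permits singletons and would make the conclusion false; the paper acknowledges this only parenthetically (``with at least two members'') without building it into the hypothesis, so your explicit identification of the needed condition --- that $|s_i|\geq 2$ for infinitely many $i$ --- is a genuine improvement rather than a quibble.
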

\begin{proof}
Let us again recall that the above statement refers to the case when we depart from our original assumption of $|p|=N$. It us furthermore known that $\{0,1\}^\mathbb{N}=\mathfrak{c}$, or more generally, the countably infinite Descartes product of finite sets (with at least two members) has the cardinality $\mathfrak{c}$. This can be proven via Cantor's method: if one enumerates (``counts'') the members of this set, then one can construct a member that was not on the enumeration list, hence the product set has to have a cardinality of at least $\mathfrak{c}$. Furthermore, a bijection can be created from such sets to a subset of the reals, hence the cardinality of the discussed set is at most $\mathfrak{c}$. From this it is clear that $S$ (which is essentially a Descartes product of the $s_i$ sets) would be of cardinality $\mathfrak{c}$ -- if $s$ would be countably infinite.
\end{proof}
This is very important from the perspective of the remaining part of this paper.

\section{How many possible states does the brain have?}
We have proven above (with some assumptions) that the cardinality of possible brain states ($|S|$) can only be $\mathfrak{c}$ if
\begin{itemize}
\item the number of ingredients in the brain is at least $\aleph_0$,
\item or if there is at least one ingredient for which the set of its possible states is also $\mathfrak{c}$.
\end{itemize}
Since we assumed that the former is not true (due to the finite amount of available energy), let us keep only the latter option. Now if generally there is no continuum, and all physical objects are ``quantized'' in a manner that their set of possible states is at most $\aleph_0$, then the brain could at most have $\aleph_0$ possible states. This is essentially equivalent to Theorem 1 outlined above. But do we know anything about the number of possible states of the human brain?

What we {\emph do} know is that if we take the metric space of rational numbers (constructed from $\mathbb{N}$), then this can be completed by means of equivalence classes Cauchy sequences\footnote{Two Cauchy sequences are considered equivalent if the difference of their terms tends to zero.} to explicitly construct the set of real numbers, $\mathbb{R}$. Let us then make two further assumptions.

The first one should be the following:
\begin{asmpt}
Picking any real number, a mathematically trained human could think of one of the Cauchy sequences tending to the given number.
\end{asmpt}
Since the set of real numbers can explicitely be constructed this way, one can find a Cauchy sequence tending to any chosen number, and then one can think of this sequence. Note that one could assume that thinking of any real number is possible, but first that number has to be explicitely constructed, and for that, we need for example the method via Cauchy sequences. While the above is an assumption, we can argue for its reasonable nature as follows. If there is a real number for which one cannot think of any Cauchy sequence tending to it, then this would help us constructing a way to think about that real number. On the other hand, since clearly finite algorithms cannot account for all members of a continuum set, this assumption may be regarded somewhat unnatural. On the other hand, the brain (or some future quantum-like computer) may not be bound by a finite set of symbols or finite algorithms, so the above assumption can actually be fulfilled. In any case, we need to stress at this point that any conclusion we make below depends on the above assumption. 

Our next assumptions are the following:
\begin{asmpt}
There exists a surjection that maps brain states onto thoughts, in other words: different thoughts correspond to different brain states.\label{a:differentbrainstates}
\end{asmpt}
\begin{asmpt}
Thinking of different Cauchy sequences (having different limits) corresponds to different brain states.\label{a:cauchythoughts}
\end{asmpt}
These assumptions really mean that thinking of Cauchy sequences with different limits has to put the brain in different states. These contain a quite natural assumption: if two thoughts would correspond to the same brain state, that would mean that there is something beyond the physical ingredients of the brain, i.e. something {\emph supernatural} would describe our thoughts. Many would agree to the existence of the supernatural, but even so, the brain states could still be in a one-on-one correspondence with the thoughts, given that the brain is still a natural object.

These assumptions, with the previously outlined theorems, lead to the following:
\begin{thm}
$|S|\leq \mathfrak{c}$, or in other words: the number of possible brain states is at least $\mathfrak{c}$. \label{t:brainstatescont}
\end{thm}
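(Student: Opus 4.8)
The plan is to exhibit an explicit injection $f:\mathbb{R}\to S$, from which $|S|\ge\mathfrak{c}$ follows at once; I read the theorem's ``in other words'' clause, $|S|\ge\mathfrak{c}$, as the operative claim. The matching upper bound $|S|\le\mathfrak{c}$ will then be handed over to Theorem~1, so that together they force $|S|=\mathfrak{c}$.

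To build $f$, I would start from Assumption~3: for every real $r$ there is at least one Cauchy sequence of rationals with limit $r$ that a mathematically trained human can think of. Using the axiom of choice over the $\mathbb{R}$-indexed family of such sequences, I would pick one representative $c_r$ for each $r$. Thinking of $c_r$ is a definite thought, and by Assumption~\ref{a:differentbrainstates} the thought-to-brain-state correspondence is injective, so $c_r$ determines a brain state $\beta(r)\in S$; set $f(r)=\beta(r)$. For injectivity, take $r\ne r'$: then $c_r$ and $c_{r'}$ are Cauchy sequences with distinct limits, so Assumption~\ref{a:cauchythoughts} gives distinct brain states, i.e. $f(r)\ne f(r')$. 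Hence $f$ is injective and $\mathfrak{c}=|\mathbb{R}|\le|S|$.

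To upgrade this lower bound to equality, I would invoke the standing hypothesis that no single ingredient carries more than continuum-many states, $|s_i|\le\mathfrak{c}$ for all $i$. Since $|S|\ge\mathfrak{c}>\aleph_0$, Theorem~\ref{t:scountable} rules out the case in which every $|s_i|\le\aleph_0$, so some ingredient must satisfy $|s_i|=\mathfrak{c}$; Theorem~1 then yields $|S|=\mathfrak{c}$, and in particular $|S|\le\mathfrak{c}$.

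I expect the genuine friction to be conceptual rather than technical. The set-theoretic skeleton --- an injection $\mathbb{R}\hookrightarrow S$ bracketed against an upper bound --- is routine. The two steps that bear real weight are the choice of a single representative sequence $c_r$ per real, which is an application of choice over a continuum-sized family and should be flagged as such, and the appeal to Assumption~\ref{a:cauchythoughts} to ensure that distinct \emph{limits}, not merely distinct sequences, force distinct brain states. This last implication is the hardest to defend, since it is precisely where the physical assumption, and not the mathematics, carries the argument.
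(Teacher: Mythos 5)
Your core argument is correct and rests on the same idea as the paper's --- that Cauchy sequences with distinct limits must land in distinct brain states --- but you package it differently. The paper argues by contradiction: if $|S|\le\aleph_0$, then some brain state would have to serve two Cauchy sequences with different limits, violating Assumptions \ref{a:differentbrainstates}--\ref{a:cauchythoughts}. You instead build an explicit injection $\mathbb{R}\hookrightarrow S$ by choosing one thinkable representative sequence $c_r$ per real. Your version is in fact the tighter one: as literally written, the paper's contradiction only excludes $|S|\le\aleph_0$, which yields $|S|\ge\mathfrak{c}$ only if one tacitly invokes the continuum hypothesis, whereas your injection delivers $|S|\ge\mathfrak{c}$ directly; and you are right to flag the continuum-indexed application of the axiom of choice, which the paper leaves implicit. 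You also correctly read the ``$\le$'' in the theorem statement as a typo for ``$\ge$'', since the ``in other words'' clause and the paper's own proof both address the lower bound. Two caveats on your closing paragraph: the upgrade to $|S|=\mathfrak{c}$ is not part of this theorem and relies on ``$|s_i|\le\mathfrak{c}$ for all $i$'' being a standing hypothesis, which it is not --- it is only a premise of Theorem 1, and the paper's own Theorem 6 accordingly concludes just $|s_i|\ge\mathfrak{c}$ for some $i$; so that step is conditional rather than proved. This does not affect the validity of your lower-bound argument, which is the substance of the statement.
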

\begin{proof}
Based on Assumptions \ref{a:differentbrainstates}-\ref{a:cauchythoughts}, all Cauchy sequences correspond to a separate brain state. If the number of brain states would be $\aleph_0$ or less, then there should be at least one brain state that simultaneously corresponds to multiple different Cauchy sequence, i.e. multiple different thoughts. This means that the brain is in the same state when thinking of Cauchy sequence $a_n$ and when thinking of Cauchy sequence $b_n$. This would violate Assumptions \ref{a:differentbrainstates}-\ref{a:cauchythoughts}.
\end{proof}

Given Theorem \ref{t:brainstatescont}, we can also state this:
\begin{thm}
$\exists i\in \mathbb{N}$ such that $|s_i|\geq\mathfrak{c}$, or in words: there has to be at least one brain ingredient for which the number of possible states has the cardinality $\mathfrak{c}$.
\end{thm}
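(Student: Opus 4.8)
The plan is to prove the final theorem by contraposition, leveraging the earlier theorems as a dichotomy. We have established in Theorem \ref{t:brainstatescont} that $|S|\geq\mathfrak{c}$. I would argue that this forces at least one ingredient to have a state set of cardinality at least $\mathfrak{c}$, by showing that the alternative leads to a contradiction. Specifically, suppose for contradiction that $\forall i=1\dots N:\; |s_i|<\mathfrak{c}$. Since every infinite cardinal strictly below $\mathfrak{c}$ is at most $\aleph_0$ (or finite), this assumption is equivalent to $\forall i=1\dots N:\; |s_i|\leq\aleph_0$.

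First I would invoke Theorem \ref{t:scountable}, which states precisely that if all $|s_i|\leq\aleph_0$ then $|S|\leq\aleph_0$. Combining this with the established lower bound $|S|\geq\mathfrak{c}$ yields $\mathfrak{c}\leq|S|\leq\aleph_0$, hence $\mathfrak{c}\leq\aleph_0$, which contradicts Cantor's theorem. Therefore the assumption fails, and there must exist at least one index $i$ with $|s_i|\geq\mathfrak{c}$. This is the core logical skeleton: the two competing theorems (Theorem 1 giving $|S|=\mathfrak{c}$ under a continuum-valued ingredient, and Theorem \ref{t:scountable} capping $|S|$ at $\aleph_0$ when all ingredients are countable) partition the possibilities, and the lower bound from Theorem \ref{t:brainstatescont} selects the continuum branch.

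The one subtlety I would flag is the passage from $|s_i|<\mathfrak{c}$ to $|s_i|\leq\aleph_0$, which implicitly uses that there is no cardinal strictly between $\aleph_0$ and $\mathfrak{c}$ — that is, it appears to invoke the continuum hypothesis, which the introduction itself notes is independent of ZFC. The honest way around this is to recognize that the finiteness of $N$ makes the independence irrelevant: a finite product $\underset{i=1\dots N}{\times}s_i$ of sets each of cardinality at most $\aleph_0$ has cardinality at most $\aleph_0^N=\aleph_0<\mathfrak{c}$, regardless of whether CH holds. So rather than splitting on whether $|s_i|<\mathfrak{c}$, I would directly assume $\forall i:\;|s_i|\leq\aleph_0$ is the only way to avoid the conclusion, apply Theorem \ref{t:scountable}, and derive the contradiction $|S|\leq\aleph_0<\mathfrak{c}\leq|S|$.

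The main obstacle, then, is not computational but conceptual: making sure the case analysis is genuinely exhaustive without secretly assuming CH. Since the claim is merely the existence of some $i$ with $|s_i|\geq\mathfrak{c}$, and since the negation of that existential is the universal statement that every $|s_i|$ is at most $\aleph_0$ (the only cardinalities below $\mathfrak{c}$ that a finite product can have while keeping the product below $\mathfrak{c}$), the contradiction with Theorem \ref{t:brainstatescont} closes the argument cleanly. I expect the full proof to be short once this exhaustiveness is stated carefully.
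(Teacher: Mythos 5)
Your proposal is correct and is essentially the paper's own argument: negate the existential, apply Theorem~\ref{t:scountable} to conclude $|S|\leq\aleph_0$, and contradict the lower bound $|S|\geq\mathfrak{c}$ from Theorem~\ref{t:brainstatescont}. Your worry about the continuum hypothesis is a fair one (the paper glosses over it too), but note that your proposed repair does not quite close it --- the clean CH-free route is to observe that a finite product of nonempty sets each of cardinality $<\mathfrak{c}$ has cardinality equal to the maximum of the factors' cardinalities (by the axiom of choice), hence still $<\mathfrak{c}$, so one never needs to reduce the hypothesis $|s_i|<\mathfrak{c}$ to $|s_i|\leq\aleph_0$ at all.
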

\begin{proof}
If there would be no such ingredient, then $|S|$ would be at most $\aleph_0$, as outlined in Theorem~\ref{t:scountable}. This would contradict Theorem \ref{t:brainstatescont}.
\end{proof}

\section{Conclusions}
Above we have proven that given our assumptions, there has to be at least one brain ingredient with a continuum set of possible states. This means that the continuum has to exist, at least for some particular physical quantities, observables or objects, if our assumptions are fulfilled. The importance of our argumentation lies in the fact that we did not rely here on any physical theory, such as a quantum field theory or general relativity. We only assumed a simple model of the brain, and made a few further assumptions about brain states and thoughts.  We note that one critical assumption is about the possibility to think of any real number -- finite algorithms clearly cannot do that.

Above we arrived at the conclusion that the possibility of thinking of any possible Cauchy sequence cannot be true unless at least for one ingredient of the brain, the cardinality of the set of its possible states is (at least) $\mathfrak{c}$. This furthermore means that there has to exist some physical quantities which are continuous, i.e. their possible values have (at least) the cardinality $\mathfrak{c}$. Finally, this means that given these specific assumptions, the continuum exists in the physical world as well.

\section*{Acknowledgment}
This research was funded by grants NKFIH FK-123842 and K-138136, as well as 2019-2.1.11-TÉT-2019-00080. The author furthermore acknowledges the insightful discussions on this topic by András László and Péter Ván, as well as the inspiring research of Tamás Matolcsi.

\end{document}